\documentclass[12pt]{article}
\usepackage[margin=1in]{geometry}
\usepackage[utf8]{inputenc} 
\usepackage[T1]{fontenc}
\usepackage{times}

\usepackage[normalem]{ulem}

\usepackage[misc]{ifsym}
\usepackage{balance} 
\usepackage{tikz}
\usepackage{booktabs}
\usepackage{footnote}
\usepackage{amsmath,amssymb,amsthm,amsfonts,latexsym,bbm,xspace,graphicx,float,mathtools,dsfont}
\usetikzlibrary{arrows}
\newtheorem{remark}{Remark}

\usepackage[numbers]{natbib}

\newcommand{\med}{\operatorname{med}}

\newcommand{\rank}{\operatorname{rank}}

\newcommand{\E}{\mathbb{E}}

\newtheorem{definition}{Definition}
\newtheorem{theorem}{Theorem}
\newtheorem{proposition}{Proposition}
\newtheorem{lemma}{Lemma}
\newenvironment{customthm}[1]
  {\innercustomthm}
  {\endinnercustomthm}
  
\newenvironment{customprop}[1]
  {\innercustomprop}
  {\endinnercustomprop}

  \usepackage{authblk}
  \begin{document}
\title{Random Rank: The One and Only Strategyproof and Proportionally Fair Randomized Facility Location Mechanism}

\author[1]{Haris Aziz\thanks{haris.aziz@unsw.edu.au}}
\author[1(\Letter )]{Alexander Lam\thanks{alexander.lam1@unsw.edu.au}}
\author[1(\Letter )]{Mashbat Suzuki\thanks{mashbat.suzuki@unsw.edu.au}}
\author[1]{Toby Walsh\thanks{t.walsh@unsw.edu.au}}
\affil[1]{UNSW Sydney}
\date{}

\maketitle
\begin{abstract}
Proportionality is an attractive fairness concept that has been applied to a range of problems including the facility location problem, a classic problem in social choice. In our work, we propose a concept called \textit{Strong Proportionality}, which ensures that when there are two groups of agents at different locations, both groups incur the same total cost. 
We show that although Strong Proportionality is a well-motivated and basic axiom, there is no deterministic strategyproof mechanism satisfying the property. We then identify a randomized mechanism called Random Rank (which uniformly selects a number $k$ between $1$ to $n$ and locates the facility at the $k$'th highest agent location) which satisfies Strong Proportionality in expectation. Our main theorem characterizes  Random Rank as the unique mechanism that achieves universal truthfulness, universal anonymity, and Strong Proportionality in expectation among all randomized mechanisms. Finally, we show via the AverageOrRandomRank mechanism that even stronger ex-post fairness guarantees can be achieved by weakening universal truthfulness to strategyproofness in expectation.  
\end{abstract}

\section{Introduction}
The facility location problem has been widely studied in various literatures. In the classic setting, we are given a set of agent locations on the unit interval, and are tasked with finding an `ideal' location on the interval to place a facility. Each agent incurs a cost equal to its distance from the facility, and therefore wants the facility to be placed as close as possible to its own location. The  problem applies to many real world scenarios, such as the geographical placement of public facilities \citep{ScVo02,Miya01}, or the choice of a political representative \citep{Moul80,FFG16}. The problem can be similarly applied to aggregate votes on what proportion of a budget is dedicated to a certain project \citep{FPPV21}, and the 1D-metric can be extended to a graph to model router placement on a network \citep{Haki65}.

In this paper, we explore the space of facility location mechanisms that have desirable fairness and strategic properties. In particular, a mechanism should satisfy anonymity, which requires that the outcome of the mechanism should not depend on the names of the agents. Secondly, we seek mechanisms that are strategyproof and hence cannot be manipulated by agents misreporting their truthful private information (which in our case is the agents' location). Thirdly, we require that the mechanisms should be fair.  We focus on a fairness concept called Strong Proportionality that has a deep foundation in the theory of fair collective decision making. It is based on the idea that when there are two groups of agents at different locations, both groups incur the same total cost. As a consequence, the facility is placed such that the ratio of distances between each group and the facility is inversely proportional to the ratio of their group sizes. In other words, the facility is placed ``proportionally" closer to the larger group of agents. Our primary focus is on identifying mechanisms which satisfy all three properties discussed above. 

In addition to the fairness axiom of Strong Proportionality, we focus on universal versions of truthfulness\footnote{We use truthfulness and strategyproofness interchangeably.} and  anonymity as desirable properties. These universal versions can be viewed as more robust variants of their counterpart axioms when considering randomized mechanisms. Universal truthful mechanisms randomize over deterministic strategyproof mechanisms, and do not incentivize an agent to misreport even when the random outcome is known (see e.g., \citep{NiRo01} ). Similarly, universally anonymous mechanisms randomize over deterministic anonymous mechanisms, and are robust to an agent changing its labelling to achieve a better outcome even if they know the random outcome. 

In our paper, we focus on randomized mechanisms as we show that no deterministic mechanisms satisfies truthfulness along with Strong Proportionality. For this reason, we target Strong Proportionality \textit{in expectation} and explore the space of mechanisms that satisfy universal anonymity, universal truthfulness and Strong Proportionality in expectation. 

\paragraph{Contributions}

Our paper makes several contributions to the facility location literature in particular and the social choice literature in general. Firstly, we adopt the fairness axiom of Proportionality from the participatory budgeting literature and formulate the axiom of Strong Proportionality, which enforces the basic and natural requirement that when there are two groups of agents at different locations, the facility should be placed closer to the larger group of agents. We show that no deterministic and strategyproof mechanism can satisfy Strong Proportionality, and hence we turn to randomized mechanisms. Our main contribution is the characterization of the Random Rank mechanism as the unique mechanism satisfying universal truthfulness, universal anonymity and Strong Proportionality in expectation. When universal truthfulness is weakened to strategyproofness in expectation, we find that there is a family of mechanisms that achieve a fairer ex-post distribution of outcomes. Finally, we show that our main characterization still holds when Strong Proportionality is strengthened to Strong Proportional Fairness, a property which captures fairness guarantees for more general groups of agents with similar locations. Statements and results lacking proofs are proven in the appendix.

	\begin{figure}
	    \begin{center}
	        \scalebox{1}{
	\begin{tikzpicture}
	  \tikzstyle{onlytext}=[]

	  \tikzset{venn circle/.style={circle,minimum width=0mm,fill=#1,opacity=0.1}}

	  \node[onlytext] (universal-truthfulness) at (2,-2) {\begin{tabular}{c}{universal}\\{truthfulness}\end{tabular}};
	    \node[onlytext] (truthful-in-expectation) at (2,-5) {\begin{tabular}{c}{strategyproofness}\\{in expectation}\end{tabular}};

	    \node[onlytext] (universal-anonymity) at (-2,-2) {\begin{tabular}{c}{universal}\\{anonymity}\end{tabular}};
	      \node[onlytext] (anonymity-in-expectation) at (-2,-5) {\begin{tabular}{c}{anonymity}\\{in expectation}\end{tabular}};

	   \node[onlytext] (universal-strong-proportionality) at (6,-2) {\begin{tabular}{c}{universal}\\{strong proportionality\protect\footnotemark}\end{tabular}};
	   \node[onlytext] (strong-proportionality-in-expectation) at (6,-5) {\begin{tabular}{c}{strong proportionality}\\{in expectation}\end{tabular}};
	
	 \draw[->, line width=1pt] (universal-truthfulness) -- (truthful-in-expectation) ;
	\draw[->, line width=1pt] (universal-anonymity) -- (anonymity-in-expectation) ;

		\draw[->, line width=1pt] (universal-strong-proportionality) -- (strong-proportionality-in-expectation) ;

	     \draw [line width=30pt,opacity=0.2,green,line cap=round,rounded corners] (-2.7,-2) -- (3.5,-2) -- (4.6,-5) -- (7.6,-5); 
	  \draw[-,pink, dotted, line width=1pt, rounded corners, line cap=round] (0.7,-2.7) -- (8,-2.7) -- (8,-1.3) -- (0.7,-1.3) -- (0.7,-2.7);
 
	\end{tikzpicture}
	 }
	\end{center}
	\caption{\label{fig:relations} Logical relations between fairness and efficiency concepts.
	An arrow from (A) to (B) denotes that (A) implies (B). 
The properties in green are simultaneously satisfied by some algorithms. The  properties in the dotted pink shapes are impossible to simultaneously satisfy.
	}
	\end{figure}
	\footnotetext{A randomized mechanism satisfies \textit{universal Strong Proportionality} if it is a distribution over deterministic mechanisms satisfying Strong Proportionality.}
\begin{savenotes}

 \begin{table*}[t!]

 			    				\small
 			    			\centering
 			    			\scalebox{0.85}{
 			    			\begin{tabular}{cllllllllll|c}
 			    				\toprule
\textbf{Mechanism}&\textbf{Universal} &\textbf{Strategyproofness} &  \textbf{Universal} & \textbf{Proportionality}&\textbf{Strong Proportionality } \\
				   &\textbf{Truthfulness}& \textbf{in expectation}& \textbf{Anonymity} &\textbf{in expectation} &\textbf{in expectation}  \\
								\midrule
								
\textbf{Random Rank}& \underline{Yes}& \underline{Yes} &\underline{Yes} & \underline{Yes}& \underline{Yes}\\						
\textbf{Random Dictatorship }& \underline{Yes}& \underline{Yes} &{\color{gray}No} &\underline{Yes} & \underline{Yes}\\
\textbf{Random Phantom}& \underline{Yes}& \underline{Yes} &  \underline{Yes}& \underline{Yes}&{\color{gray}No} \\						    
 \textbf{AverageOrRR}$-p$&{\color{gray}No} &  \underline{Yes}$^*$ & \underline{Yes}  & \underline{Yes}&\underline{Yes}\\
 \textbf{Median}&\underline{Yes}& \underline{Yes} &  \underline{Yes}& {\color{gray}No}&{\color{gray}No} \\
  \textbf{Uniform Phantom}&\underline{Yes}& \underline{Yes} &  \underline{Yes}& \underline{Yes}&{\color{gray}No} \\
 			    				\bottomrule
 			    			\end{tabular}
 			    			
 			    			}
 \caption{Summary of properties satisfied by the mechanisms we study. We show that in the space of all randomized mechanisms, only the Random Rank mechanism satisfies each property specified below. $^*$The AverageOrRR$-p$ mechanism is strategyproof in expectation if and only if $p\in [0,\frac{1}{2}]$.}
 			    			
 			    			\label{table:summary}
 			    		\end{table*}
 			    		
\end{savenotes}
\section{Related Work}
  \paragraph{Randomized social choice functions.}
  Randomized schemes have been proposed for a variety of social choice contexts to circumvent impossibility results (e.g. see \citep{Bran17,Aziz19}).
  For instance, the famous impossibility result of the Gibbard-Sattherwaite theorem \citep{Gibb73,Satt75} can be overcome by using a random dictatorship that chooses an outcome uniformly at random, which was first studied by \citet{Gibb77a,Gibb78}. The random dictatorship is similar to the `Random Rank' mechanism we propose in our main characterization result. A generalization of random dictatorship is applied to the $k-$facility location problem by \citet{FoTz10}, achieving strategyproofness in expectation and a constant approximation for optimal total cost. Other recent studies on randomized mechanisms in the facility location problem have investigated the tradeoff between a mechanism's approximation ratio and its variance \citep{PWZ18}, and proposed strategyproof mechanisms which minimize the maximum agent envy \citep{CFT16}. Finally, we note that our results differ from the existing characterizations of randomized facility location mechanisms by \citet{EPS02} and \citet{PRSS14}, as their definitions of strategyproofness revolve around stochastic dominance, whilst our main characterization uses stronger axiom of universal truthfulness. Furthermore, the mechanisms we discuss are additionally constrained to be proportionally fair and thus our characterizations are more succinct.

\paragraph{Facility location problems.}  
 Facility location problems have been widely studied in the fields of computer science, economics and operations research.
 We take a mechanism design approach, finding mechanisms which do not incentivize agents to misreport their locations. The origins of this approach can be traced to the seminal work by \citet{Moul80}, which provides a characterization of strategyproof mechanisms when agent preferences are single-peaked. He also provides further characterizations when the axioms of anonymity and Pareto efficiency are imposed.
 There have since been many extensions of facility location mechanism design that build off these results, such as the placement of multiple facilities by \citet{Miya98,Miya01} and \citet{FoTz13}, as well as when agents have fractional or optional preferences \citep{FLL+18,CFL+20}. For further related work, we refer the reader to a survey by \citet{CF-RL+21}.
A connection between voting and facility location problems is drawn by \citet{FFG16}, in which mechanisms minimizing social cost are formulated for the problem where there is a fixed, limited set of candidates on the real line.

\paragraph{Fairness in collective decision problems.}  
 Fairness constraints and measures have been formulated in a wide variety of different collective decision problems \citep{Stei48a,Dumm97a,ABM19}. Our paper draws from the area of participatory budgeting, in which various axioms have been proposed to represent proportional fairness concerns for groups of agents with similar preferences \citep{ABM19,PSP21}. Specifically, the axiom of ``Proportionality" we use is drawn from the work on participatory budgeting by \citet{FPPV21}, and our new axiom of ``Strong Proportionality" is, as the name suggests, a stronger version of the axiom. Our paper is most closely related to the work by \citet{ALLW21}, providing a characterization of strategyproof and proportionally fair mechanisms, but in a much more general randomized setting which introduces new conceptual and technical challenges as we see in later sections. Another related paper by \citet{ZLC21} has similar underlying motivations, exploring group fairness cost objectives and proposing strategyproof mechanisms that approximate these objectives. In particular, satisfying our axiom of Strong Proportionality has the consequence of minimizing the maximum total group cost when there are only two groups, optimizing the objective of the same name defined by \citet{ZLC21}. However, we explore fairness guarantees for groups of agents at the same location, whilst \citet{ZLC21} focus on optimal costs for predetermined groups of agents which may have varying locations.

\section{Preliminaries}


Let $N=\{1, \ldots, n\}$ be a set of agents with $n\ge 2$, and let $X:=[0,1]$ be the domain of locations. The unit interval can be scaled and shifted to represent any compact interval on $\mathbb{R}$. As we will explain, our characterization results hold even when the domain is the real line. However, we restrict to $X:=[0,1]$ to remain consistent with the related literature \citep{MaMo11a,ACLP20a,ALLW21,FPPV21}. Agent $i$'s location is denoted by $x_i\in X$; the profile of agent locations is denoted by ${x}=(x_1, \ldots, x_n)\in X^n$.

A  \emph{deterministic mechanism} is a mapping $f :  X^n\rightarrow X$ from the agent location profile to a facility location. Given a location profile $x\in X^n$, agent $i$'s cost is $d(x_i, f(x)):=|x_i-f(x)|$. 
  
  A \emph{randomized mechanism} is a probability distribution over deterministic mechanisms.\footnote{This definition of randomized mechanisms is commonly used in the literature \citep{Dobz13, SeSa19}, and is more general than outputting a distribution over possible locations (we can take a distribution over constant mechanisms to produce a mechanism which outputs a distribution over possible locations).}
  
A common requirement in mechanism design is that the mechanism output should not depend on the agents' labels. That is, the mechanism outcome is invariant under a permutation of agent labels. Formally:
\begin{definition}[Anonymous]\label{def: anon}
A mechanism $f$ is \emph{anonymous} if,  for every  location profile  $x\in X^n$  and every bijection $\sigma \ : \ N \rightarrow N$,
$$f(x_{\sigma})=f(x),$$
 where ${x}_{\sigma}:=(x_{\sigma(1)},  x_{\sigma(2)}, \ldots,  x_{\sigma(n)}).$
\end{definition}
When considering randomized mechanisms, the notion of anonymity extends as follows: 
\begin{itemize}
\item A randomized mechanisms is \textit{universally anonymous} if it is a distribution over deterministic anonymous mechanisms. If a mechanism is not universally anonymous, an agent that knows the random outcome may change its labelling to achieve a better outcome.
\item A randomized mechanism is \textit{anonymous in expectation} if the expected facility location does not change when the agents are relabelled. 
\end{itemize}  
Another normative property we seek in our paper is \textit{truthfulness}, which incentivizes agents to report their locations truthfully.
\begin{definition}[Strategyproof]\label{def: SP}
A mechanism $f$ is  \emph{strategyproof} if, for every agent $i\in N$, we have for every $x_i'$ and for every $\hat{{x}}_{-i}$,\footnote{Here $x_{-i}$ denotes the location profile of all agents except $i$.}
$$ 
 d(x_i,f(x_i,\hat{{x}}_{-i} ))\le d(x_i,f( x_i',\hat{{x}}_{-i})). 
$$
\end{definition}
In the context of randomized mechanisms, the following are the two main notions of strategyproofness (truthfulness):
\begin{itemize}
\item A randomized mechanism is \emph{universally truthful} if it is a distribution over deterministic strategyproof mechanisms. Under a universally truthful mechanism, an agent will not have an incentive to misreport its location even if it knows the random outcome.
\item A randomized mechanism is \emph{truthful in expectation} if no agent can improve its expected distance from the facility by misreporting its own location.
\end{itemize}

Given a location profile $x$, a facility location $y$ is said to be \textit{Pareto optimal} if there exists no other facility location $y'$ such that $d(x_i, y') \leq d(x_i,y)$ for all $i$, and $d(x_i, y') < d(x_i,y)$ for at least one agent. A mechanism $f$ is said to be \textit{(Pareto) efficient} if, for every location profile $x$, the facility location $f(x)$ is Pareto optimal. In our setting, Pareto optimality is equivalent to requiring that $f({x})\in [\min_{i\in N}x_i,\max_{i\in N}x_i] $.

A mechanism is \emph{ex-post efficient} if it only gives positive support to Pareto efficient deterministic mechanisms.

\subsection*{Deterministic Anonymous and Strategyproof Mechanisms }

When agents have single-peaked preferences on the real line, \citet{Moul80} has characterized the set of anonymous, strategyproof deterministic mechanisms as the class of `Phantom' mechanisms, which place the facility at the median of the $n$ reported agent locations and $n+1$ fixed `phantom' locations. A similar characterization for the unit interval with symmetric single peaked preferences has been shown by \citet{MaMo11a}. 

\begin{definition}[Phantom Mechanism]
Given $n+1$ fixed real numbers $0\le y_1\le \dots \le y_{n+1}\le 1$ and a location profile ${x}$, a Phantom mechanism $f$ locates the facility at $$f({x})=\med (x_1,\dots,x_n,y_1,\dots,y_{n+1}).$$
\end{definition}

\begin{theorem}[\citet{MaMo11a}]\label{thm: moul}
A (deterministic) facility location mechanism is anonymous and strategyproof if and only if it is a Phantom mechanism. 
\end{theorem}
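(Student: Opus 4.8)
The plan is to prove the two directions separately, with the converse being the substantive one.

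\textbf{Phantom $\Rightarrow$ anonymous and strategyproof.} Anonymity is immediate: $\med(x_1,\dots,x_n,y_1,\dots,y_{n+1})$ depends only on the multiset $\{x_1,\dots,x_n\}$, so permuting agents cannot change the output. For strategyproofness, fix an agent $i$, freeze the other $n-1$ reports and the $n+1$ phantoms, and view the outcome as a function $g(t)=\med(t,z_1,\dots,z_{2n})$ of $i$'s report $t$, where the $z_j$ are the $2n$ frozen values. Writing $a$ and $b$ for the $n$-th and $(n+1)$-th smallest of the $z_j$, one checks that $g(t)=\max(a,\min(t,b))$, i.e. $g$ clamps $t$ to $[a,b]$. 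Every achievable outcome thus lies in $[a,b]$, and the truthful report $t=x_i$ yields the point of $[a,b]$ closest to $i$'s peak; hence no misreport reduces $|x_i-g(t)|$.

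\textbf{Anonymous and strategyproof $\Rightarrow$ Phantom.} First I would record two standard structural consequences of strategyproofness on the (symmetric) single-peaked domain: (i) $f$ is monotone non-decreasing in each coordinate, and (ii) $f$ is \emph{uncompromising}, meaning that if $y=f(x)$ then moving any single agent's report while keeping it on the same side of $y$ leaves the outcome at $y$. I would derive (ii) directly from the strategyproofness inequality applied in both directions (the truthful agent does not want to deviate, and the deviator does not want to return), which is the only place the single-peaked structure is really used. By anonymity, $f$ depends only on the sorted profile $x_{(1)}\le\cdots\le x_{(n)}$.

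Next I extract the phantoms from extreme profiles. For $k=0,1,\dots,n$ define
\[
y_{k+1}:=f(\underbrace{0,\dots,0}_{n-k},\underbrace{1,\dots,1}_{k}),
\]
which is well defined by anonymity; monotonicity gives $y_1\le\cdots\le y_{n+1}$, all lying in $[0,1]$. To show $f(x)=\med(x,y_1,\dots,y_{n+1})$ for arbitrary $x$, set $y=f(x)$ and use uncompromisingness to push every agent with $x_i>y$ up to $1$ and every agent with $x_i<y$ down to $0$, one at a time; each move keeps that agent on the same side of $y$, so the outcome stays $y$. If no agent sits exactly at $y$, this reduces $x$ to an extreme profile with $b:=|\{i:x_i>y\}|$ ones, whence $y=y_{b+1}$. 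A short counting argument then confirms this is the median: strictly below $y$ lie the $n-b$ low agents plus at most $b$ phantoms, and strictly above lie the $b$ high agents plus at most $n-b$ phantoms, so at most $n$ of the $2n+1$ values fall on each side of $y$ (which itself appears as $y_{b+1}$), making $y$ the $(n+1)$-th order statistic, i.e. $\med(x,y_1,\dots,y_{n+1})$.

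\textbf{Main obstacle.} The delicate points are establishing uncompromisingness cleanly from the strategyproofness inequalities, and handling the boundary case where one or more agents report exactly at the outcome $y$, so that the reduction to a pure $0/1$ profile is ambiguous. For the latter I would argue by a squeeze using monotonicity, perturbing the offending agents slightly above and below $y$ and invoking the already-established generic formula to pin down $f(x)$ to the median value. I expect this tie-handling, together with verifying the correct ordering of the extracted constants, to be where most of the care is needed; the remainder is bookkeeping with order statistics.
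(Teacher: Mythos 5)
The paper itself offers no proof of this statement: it is imported from \citet{MaMo11a}, so your proposal has to stand on its own. Your first direction is fine: the clamping identity $g(t)=\max(a,\min(t,b))$ for the median as a function of one report, together with the observation that truth-telling selects the point of the attainable interval $[a,b]$ closest to the agent's peak, is the standard and correct argument, and anonymity of the median is immediate.

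The converse, however, has a genuine gap, and it sits exactly where you say ``the only place the single-peaked structure is really used.'' In this paper's model, strategyproofness (Definition~2) is required only against the \emph{distance} cost $|x_i-f(x)|$, i.e.\ symmetric single-peaked preferences, and the two inequalities you invoke do not imply uncompromisingness. Concretely, take $y=f(x)=0$, $x_i=0.3$, $x_i'=0.5$, and suppose $f(x_i',x_{-i})=0.6$: the truthful agent's inequality holds ($0.3\le 0.3$) and the deviator's inequality holds ($0.1\le 0.5$), yet the outcome moved while agent $i$ stayed on the same side of $y$. Symmetric costs permit the outcome to jump to the mirror image across the agent's peak, which the distance inequalities cannot rule out. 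Worse, it is not just your derivation but the lemma itself that fails in this model: the rule $f(x)=1$ if $\min_i x_i\ge \tfrac12$ and $f(x)=0$ otherwise is anonymous and strategyproof (an agent facing some opponent below $\tfrac12$ cannot affect the outcome; an agent facing only opponents at or above $\tfrac12$ chooses between outcomes $0$ and $1$, and truth-telling gives the weakly closer one), yet it is discontinuous, violates uncompromisingness, and is not a Phantom mechanism. So the implication ``anonymous and strategyproof $\Rightarrow$ Phantom'' cannot be established from the distance inequalities at all; one needs either Moulin's original domain of \emph{all} single-peaked preferences—where strategyproofness against every preference with peak $x_i$ yields the betweenness condition $f(x_i',x_{-i})$ lies between $x_i'$ and $f(x)$, from which your argument does go through—or an additional hypothesis such as continuity. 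This is precisely the subtlety studied in the cited source, where the class of strategyproof rules on the symmetric single-peaked domain is strictly larger than the Phantom class and contains exactly such discontinuous rules. Your remaining steps (extracting phantoms from $\{0,1\}$-profiles, the counting argument, the perturbation for ties) are sound bookkeeping, but they rest on this broken foundation.
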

Setting two of the phantoms at $0$ and $1$ also provides a characterization of anonymous, strategyproof and efficient mechanisms.
\begin{theorem}[\citet{MaMo11a}]\label{rem: moul}
A (deterministic) facility location mechanism is anonymous, strategyproof and efficient if and only if it is a Phantom mechanism with phantoms $y_1=0$ and $y_{n+1}=1$.
\end{theorem}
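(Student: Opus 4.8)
The plan is to invoke Theorem~\ref{thm: moul} to reduce the claim to a statement purely about Phantom mechanisms, and then determine exactly which choices of the phantom vector $(y_1,\dots,y_{n+1})$ yield an efficient mechanism. Since Theorem~\ref{thm: moul} already identifies the anonymous and strategyproof mechanisms as precisely the Phantom mechanisms, it suffices to show that a Phantom mechanism is efficient if and only if $y_1=0$ and $y_{n+1}=1$. Throughout I would use the characterization of Pareto optimality given in the preliminaries, namely that for a fixed profile $x$ the location $f(x)$ is Pareto optimal exactly when $f(x)\in[\min_i x_i,\max_i x_i]$; efficiency then means this inclusion holds for every profile.

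For the ``if'' direction, fix a profile $x$ and write $m=\min_i x_i$ and $M=\max_i x_i$. The value $f(x)$ is the median of the $2n+1$ numbers $x_1,\dots,x_n,y_1,\dots,y_{n+1}$, i.e.\ their $(n+1)$-th smallest entry. I would argue by counting how many of these entries can fall strictly below $m$ or strictly above $M$. All $n$ agent locations lie in $[m,M]$, and since $y_{n+1}=1\ge M\ge m$, at most the $n$ phantoms $y_1,\dots,y_n$ can be strictly less than $m$; hence at most $n$ of the $2n+1$ entries are below $m$, forcing the $(n+1)$-th smallest to be at least $m$. Symmetrically, because $y_1=0\le m\le M$, at most the $n$ phantoms $y_2,\dots,y_{n+1}$ can exceed $M$, so the median is at most $M$. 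Thus $f(x)\in[m,M]$ for every profile, and the mechanism is efficient.

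For the ``only if'' direction I would use degenerate profiles to pin down the extreme phantoms. If $y_1>0$, consider the profile $x=(0,\dots,0)$: efficiency demands $f(x)=0$, yet the sorted list consists of $n$ zeros followed by the positive values $y_1\le\dots\le y_{n+1}$, so its $(n+1)$-th smallest entry is $y_1>0$, a contradiction; hence $y_1=0$. Symmetrically, taking $x=(1,\dots,1)$ shows that $y_{n+1}<1$ would give $f(x)=y_{n+1}<1$, again contradicting efficiency, so $y_{n+1}=1$. Combining the two directions with Theorem~\ref{thm: moul} yields the claim.

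The argument is essentially routine given Theorem~\ref{thm: moul} and the interval description of Pareto optimality; the only place requiring care is the median counting in the ``if'' direction, where one must track that pinning each extreme phantom at an endpoint leaves at most $n$ entries able to escape the interval $[m,M]$ on each side, so the $(n+1)$-th order statistic necessarily stays inside. The degenerate all-at-$0$ and all-at-$1$ profiles make the ``only if'' direction immediate.
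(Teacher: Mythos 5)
Your proposal is correct, but note that the paper itself offers no proof of this statement at all: both Theorem~\ref{thm: moul} and Theorem~\ref{rem: moul} are imported as known results from \citet{MaMo11a}, so there is no internal argument to compare against. What your write-up does is derive the second cited theorem from the first, which is a genuine (if modest) addition: given Theorem~\ref{thm: moul}, the content of Theorem~\ref{rem: moul} reduces to showing that a Phantom mechanism is efficient exactly when its extreme phantoms sit at the endpoints, and your two halves handle this cleanly. The ``if'' direction's counting is right: with $y_{n+1}=1\ge m$ at most the $n$ entries $y_1,\dots,y_n$ can lie strictly below $m=\min_i x_i$, so the $(n+1)$-th smallest of the $2n+1$ values is at least $m$, and symmetrically for $M=\max_i x_i$; the ``only if'' direction's degenerate profiles $(0,\dots,0)$ and $(1,\dots,1)$ correctly force $y_1=0$ and $y_{n+1}=1$, since otherwise the median equals $y_1>0$ (respectively $y_{n+1}<1$), violating the interval characterization of Pareto optimality stated in the preliminaries. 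One point worth making explicit if you polish this: a given mechanism could conceivably admit several phantom representations, but your ``only if'' argument applies to an arbitrary representation of an efficient Phantom mechanism, which is exactly the form of uniqueness the theorem needs. The payoff of your approach is that it makes the efficiency refinement self-contained modulo the single external characterization, rather than requiring a second appeal to the literature.
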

Such a mechanism can also be interpreted as a Phantom mechanism with $n-1$ phantom locations.

\section{Proportional Fairness Axioms}
\label{section: fairness axioms}
In this section, we describe and motivate our axioms of proportional fairness used throughout the paper. We first state the fairness property of Proportionality which was defined in the context of the facility location problem by \citet{ALLW21}.
\begin{definition}[Proportionality]
A mechanism $f$ satisfies \emph{Proportionality} if for any location profile ${x}\in \{0,1\}^n$ and set $S$ of agents at the same location,
$$d(x_i,f({x}))\le \frac{n-|S|}{n}\quad \forall i\in S.$$
\end{definition}
\begin{remark}
Proportionality is equivalent to requiring that if all agents are located at the interval endpoints, the facility is placed at the average of the agents' locations. 
\end{remark}

We consider a strengthening of Proportionality to allow for more general location profiles.\footnote{Strong Proportionality is applicable to settings lacking information on the endpoints of the domain, or when the domain has no endpoints (i.e. the interval is unbounded).}    

\begin{definition}[Strong Proportionality]
A mechanism $f$ satisfies \emph{Strong Proportionality} if for any location profile ${x}\in \{\alpha,\beta\}^n$ with $0\le \alpha < \beta \le 1$ and set $S$ of agents at the same location,
$$d(x_i,f({x}))\le \frac{n-|S|}{n}(\beta-\alpha) \quad \forall i \in S.$$
\end{definition}
\begin{remark}
\emph{Strong Proportionality} is equivalent to requiring that if agents are at most two different locations, the facility is placed at the average of the agents' locations. 
\end{remark}

Strong Proportionality captures a number of basic fairness properties.  
\begin{itemize}
\item Unanimity: If all agents are unanimous in their reported locations, then the facility is located at this same location. 
\item Equitable treatment of groups: When there are at most two groups of agents at different locations, both groups incur the same total cost. 
\end{itemize}
While Strong Proportionality is a very basic and modest property, applying only to a small subset of possible location profiles, it cannot be satisfied by a deterministic, strategyproof mechanism.
\begin{proposition}\label{prop: det}
No deterministic and strategyproof mechanism satisfies Strong Proportionality.
\end{proposition}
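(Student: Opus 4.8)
The plan is to avoid the Phantom characterization (Theorem~\ref{thm: moul} requires anonymity, which Strong Proportionality does not by itself guarantee) and instead construct a direct manipulation. The engine of the argument is the rigidity recorded in the Remark following the definition: whenever every agent sits at one of at most two distinct locations, Strong Proportionality forces the facility \emph{exactly} at the average of the reported locations. I would first make this precise. Suppose $k$ agents report $\alpha$ and $n-k$ report $\beta$ with $\alpha<\beta$. Applying the defining inequality to the group at $\alpha$ gives $d(\alpha,f(x))\le \frac{n-k}{n}(\beta-\alpha)$, and to the group at $\beta$ gives $d(\beta,f(x))\le \frac{k}{n}(\beta-\alpha)$. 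Summing these bounds yields $d(\alpha,f(x))+d(\beta,f(x))\le \beta-\alpha$, which forces $f(x)\in[\alpha,\beta]$ and hence both inequalities to hold with equality; the unique such point is $f(x)=\frac{k\alpha+(n-k)\beta}{n}$, the average.

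Next I would specialize to a convenient family of two-location profiles and read off $f$ explicitly. Fix agents $2,\dots,n$ at location $1$ and let agent $1$ report $t\in[0,1)$. This is a two-location profile with one agent at $t$ and $n-1$ agents at $1$, so the averaging conclusion gives $f(t,1,\dots,1)=\frac{t+(n-1)}{n}$. The key point is that this formula holds for \emph{every} admissible $t$, so both a truthful report and a deviating report land in profiles where $f$ is completely pinned down, provided each report stays strictly below $1$.

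Finally I would exhibit a profitable deviation. Take agent $1$'s true location to be $x_1=\frac{n-1}{n}$. Reporting truthfully places the facility at $\frac{(n-1)/n+(n-1)}{n}=\frac{n^2-1}{n^2}$, at distance $\frac{n-1}{n^2}>0$ from $x_1$. Reporting $t=0$ instead places the facility at $\frac{n-1}{n}=x_1$, giving cost $0$. This strict improvement contradicts strategyproofness, and the construction works uniformly for every $n\ge 2$.

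The only delicate step is the ``exactly the average'' claim: one must notice that the two group inequalities are not merely individually loose but together cut out a single feasible facility location, so strategyproofness has no freedom to exploit. Everything after that is an explicit computation. A secondary point to respect is that the contradiction requires both the honest and the manipulated profile to remain two-location profiles, which is why I keep the deviating report strictly below the shared location $1$.
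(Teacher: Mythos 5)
Your proof is correct and follows essentially the same route as the paper: pin the facility to the exact average of the reports on every two-location profile, then let one agent exaggerate away from the group so that the average lands precisely on its true location, contradicting strategyproofness. The only differences are in polish rather than substance --- the paper carries out the construction concretely for $n=2$ (profiles $(0,0.5)$ and $(0,1)$), whereas your parametrized family handles every $n\ge 2$ uniformly, and you also prove the ``facility equals the average'' rigidity that the paper only records as an unproved remark.
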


\begin{figure}
\centering{
\includegraphics[scale=0.5]{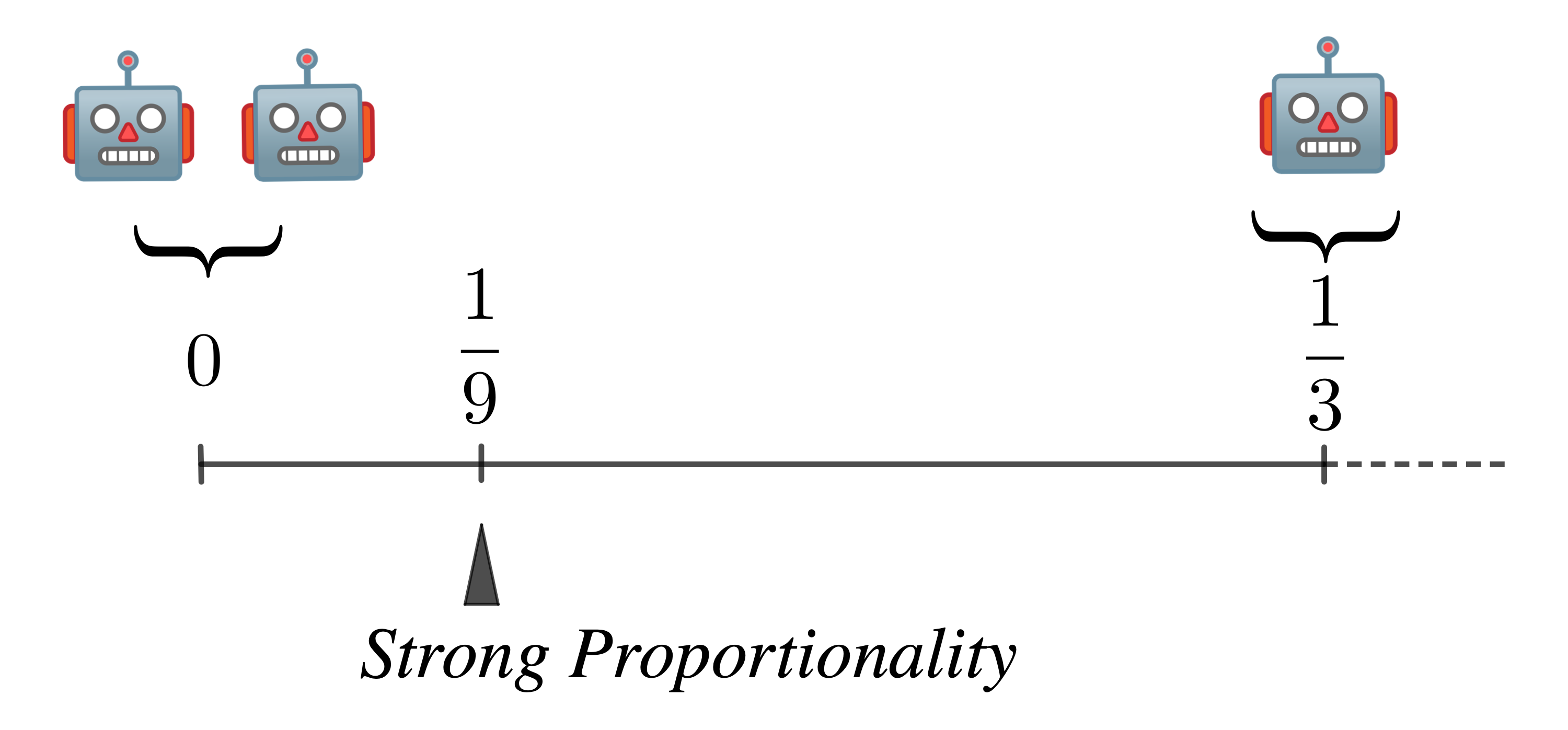}}
\caption{Facility location problem with $n=3$ agents and location profile $x=(0,0,\frac{1}{3})$. For this profile, Strong Proportionality requires that the facility is placed at location $\frac{1}{9}$, which is achieved in expectation by the Random Rank mechanism.}
\label{fig2}
\end{figure}

As a result, we turn to the larger, more general space of randomized mechanisms. However, Proposition~\ref{prop: det} implies no universally truthful mechanism can guarantee an output that achieves Strong Proportionality. Consequently, we attempt to find universally truthful mechanisms that instead achieve Strong Proportionality \textit{in expectation}.

When considering randomized mechanisms, our proportional fairness axioms are adapted as follows:

\begin{definition}[Proportionality in expectation]
A mechanism $f$ satisfies \emph{Proportionality in expectation} if for any location profile ${x}\in \{0,1\}^n$ and set $S$ of agents at the same location,
$$\E[d(x_i,f({x}))]\le \frac{n-|S|}{n}\quad \forall i\in S.$$
\end{definition}
\begin{definition}[Strong Proportionality in expectation]
A mechanism $f$ satisfies \emph{Strong Proportionality in expectation} if for any location profile ${x}\in \{\alpha,\beta\}^n$ with $0\le \alpha < \beta \le 1$ and set $S$ of agents at the same location,
$$\E[d(x_i,f({x}))]\le \frac{n-|S|}{n}(\beta-\alpha) \quad \forall i \in S.$$
\end{definition}

\section{Characterization of Universal Truthfulness and Strong Proportionality}\label{sec: main}
In this section, we present the main result of the paper: a unique characterization of facility location mechanisms satisfying universal truthfulness, universal anonymity and Strong Proportionality in expectation. First, we define a mechanism which satisfies the aforementioned properties.

\begin{definition}
The \emph{Random Rank} mechanism $f_{RR}$ chooses a number $k$ at uniformly at random from $\{1,\dots,n\}$ and outputs $\rank^k({x}):=\med(\underbrace{0,\dots,0}_{n-k},\underbrace{1,\dots,1}_{k-1},x_1,\dots,x_n)$.
\end{definition}
\begin{remark}
The Random Rank mechanism can also be interpreted as simply choosing a number $k$ at uniformly at random from $\{1,\dots,n\}$ and outputting the $k$th largest agent location.
\end{remark}

As we will show, Random Rank is the only mechanism in the space of randomized mechanisms to satisfy universal truthfulness, universal anonymity and Strong Proportionality in expectation.

\begin{theorem}\label{thm: main}
A mechanism is universally anonymous, universally truthful and Strong Proportional in expectation if and only if it is the Random Rank mechanism.
\end{theorem}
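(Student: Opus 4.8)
The plan is to prove both directions by leaning on Moulin's characterization (Theorem~\ref{thm: moul}) to reduce every admissible mechanism to a lottery over Phantom mechanisms, and then to extract that lottery's weights from Strong Proportionality in expectation.

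For the ``if'' direction, I would first observe that each $\rank^k$ is a Phantom mechanism (with $n-k$ phantoms at $0$ and $k-1$ at $1$), hence anonymous and strategyproof; consequently $f_{RR}$, being a lottery over these deterministic mechanisms, is universally anonymous and universally truthful. Strong Proportionality in expectation then follows by direct computation: on a profile $x\in\{\alpha,\beta\}^n$ with $m$ agents at $\beta$, the $k$th largest location equals $\beta$ for exactly $m$ values of $k$, so the facility sits at $\beta$ with probability $m/n$ and at $\alpha$ otherwise, making every expected group cost meet its bound with equality.

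For the ``only if'' direction, let $\mu$ be the lottery defining the mechanism. Since $\mu$ is supported on mechanisms that are simultaneously anonymous and strategyproof (one subtlety to address is that universal anonymity and universal truthfulness constrain the \emph{same} underlying lottery, so its support lies in the intersection of the two classes), Theorem~\ref{thm: moul} lets me treat $\mu$ as a distribution over Phantom mechanisms. The crucial reformulation of Strong Proportionality in expectation comes next: on a two-location profile the two group constraints sum to $\E[\,|f(x)-\alpha|+|f(x)-\beta|\,]\le \beta-\alpha$, while the triangle inequality gives $|f(x)-\alpha|+|f(x)-\beta|\ge\beta-\alpha$ pointwise. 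Hence the facility lies in $[\alpha,\beta]$ almost surely and each constraint is tight, yielding the two facts $f(x)\in[\alpha,\beta]$ a.s.\ and $\E[f(x)]=\alpha+\tfrac{m}{n}(\beta-\alpha)$. Applying the containment fact to the extreme profiles with $(\alpha,\beta)=(0,\varepsilon)$ and $(1-\varepsilon,1)$ and letting $\varepsilon\to 0$ forces $\mu$-almost every Phantom to have its smallest phantom at $0$ and its largest at $1$; by Theorem~\ref{rem: moul} the mechanism is therefore ex-post efficient, so I may work with the $n-1$ interior phantoms $0\le p_1\le\cdots\le p_{n-1}\le 1$.

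The final and most delicate step is to pin down the law of these interior phantoms. A short median computation shows that on the profile with $n-m$ agents at $\alpha$ and $m$ at $\beta$ the efficient Phantom outputs $\max(\alpha,\min(\beta,p_m))$. Substituting into the expectation identity, fixing $\beta=1$, and writing $\max(\alpha,p_m)=\alpha+(p_m-\alpha)^+$ turns $\E[f(x)]=\alpha+\tfrac{m}{n}(1-\alpha)$ into $\int_\alpha^1\Pr[p_m>u]\,du=\tfrac{m}{n}(1-\alpha)$ for every $\alpha$; differentiating shows the survival function of $p_m$ is constant equal to $m/n$ on $(0,1)$, so $p_m\in\{0,1\}$ almost surely with $\Pr[p_m=1]=m/n$. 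Thus $\mu$-almost every Phantom has all interior phantoms in $\{0,1\}$, i.e.\ is some $\rank^k$, and writing $q_r$ for the probability of the Phantom with $r$ zeros, the relations $\sum_{r=0}^{m-1}q_r=\Pr[p_m=1]=m/n$ together with normalization form a triangular linear system whose unique solution is $q_r=1/n$ for all $r$ -- exactly Random Rank. I expect the main obstacle to be this last quantitative step: correctly deriving the clamp formula for the output and converting the whole family of expectation identities (indexed by all $\alpha,\beta$) into a statement that rigidly determines each order statistic's distribution, together with the measure-theoretic care needed to pass from ``for each profile, $\mu$-a.e.'' statements to a single full-measure conclusion.
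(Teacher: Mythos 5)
Your proposal is correct, and its overall skeleton matches the paper's: both reduce the ``only if'' direction, via the characterizations in Theorems~\ref{thm: moul} and~\ref{rem: moul}, to a lottery over Phantom mechanisms, then show the phantom order statistics are $\{0,1\}$-valued with $\Pr[p_m=1]=\tfrac{m}{n}$, and finally conclude that the induced mixture over the $\rank^k$ mechanisms is uniform (your triangular system in the $q_r$ is the same computation as the paper's telescoping $\Pr[Y_{(n-k)}=0]-\Pr[Y_{(n-k+1)}=0]=\tfrac{1}{n}$). Where you genuinely diverge is the middle step. The paper first establishes ex-post efficiency through a separate lemma (Lemma~\ref{lem: UE}, whose Claim~2 only uses unanimity-type profiles and Proportionality in expectation), and then its Claim~1 extracts, for each order statistic, the one-sided tail bounds $\Pr[Y_{(i)}\leq\alpha]\leq\tfrac{n-i}{n}$ and $\Pr[Y_{(i)}\geq\beta]\leq\tfrac{i}{n}$ directly from the two group constraints, converting them into point masses by monotone limits $\alpha\to 1$, $\beta\to 0$. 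You instead upgrade the two group inequalities into exact identities at the outset: summing them and invoking the pointwise triangle inequality gives $f(x)\in[\alpha,\beta]$ almost surely together with $\E[f(x)]=\alpha+\tfrac{m}{n}(\beta-\alpha)$, which simultaneously delivers ex-post efficiency (replacing Lemma~\ref{lem: UE}) and, via the clamp formula $f(x)=\med(\alpha,\beta,p_m)$ and the layer-cake identity $\int_\alpha^1\Pr[p_m>u]\,du=\tfrac{m}{n}(1-\alpha)$, pins the survival function to the constant $\tfrac{m}{n}$ on $(0,1)$. Your route is more systematic and extracts strictly stronger information at each stage (equalities rather than inequalities, and the tightness observation is a clean structural fact); its costs are that you must verify the clamp formula explicitly and handle the differentiation of a possibly discontinuous survival function with care (monotonicity plus the identity holding for \emph{every} $\alpha$ does suffice), whereas the paper's argument needs only elementary probability bounds and limits. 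Both proofs are valid; yours trades a little extra bookkeeping for a tighter, more informative intermediate statement.
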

\begin{proof}
$(\impliedby)$
The Random Rank mechanism is universally anonymous and universally truthful as each realization of the mechanism, $\rank^k$, is strategyproof and anonymous by Theorem~\ref{rem: moul}. Consider any location profile $x\in \{\alpha,\beta\}^n$, denote $S_\alpha$ as the set of agents located at $\alpha$ and $S_\beta=N\setminus S_\alpha$ as the set of agents located at $\beta$. For all $i\in S_\alpha$, we have  $\E[d(x_i,f_{RR}(x))]= \frac{|S_\beta|}{n}(\beta-\alpha) = \frac{n-|S_\alpha|}{n} (\beta -\alpha)$ since Random Rank places the facility at $\alpha$ with probability $\frac{|S_\alpha|}{n}$, and at $\beta$ with probability $\frac{|S_\beta|}{n}$ . By a similar and symmetric argument, for all $j\in S_\beta$, we have $\E[d(x_j,f_{RR}(x))]= \frac{n-|S_\beta|}{n} (\beta -\alpha)$. Thus Random Rank satisfies Strong Proportionality in expectation along with universal anonymity and universal truthfulness. 

$(\implies)$ Let $f$ be a mechanism satisfying universal anonymity, universal truthfulness, and Strong Proportionality in expectation. By Lemma~\ref{lem: UE} (which we will show later), $f$ also satisfies ex-post efficiency as Strong Proportionality in expectation implies Proportionality in expectation. Hence $f$ is a distribution over deterministic truthful, anonymous and efficient mechanisms. Furthermore, by Theorem~\ref{rem: moul}, each of the deterministic mechanisms are Phantom mechanisms with $n-1$ phantom locations. Consequently we see that $f$ can be written as
\[f({x})=\med(Y_1,\cdots,Y_{n-1}, x_1,\cdots,x_n),\] 
where the $Y_i$'s are random variables corresponding to the locations of the $n-1$ Phantoms. Denote $Y_{(i)}$ as the random variable corresponding to the $i$'th order statistic, defined by sorting the values of $Y_1,\cdots Y_{n-1}$ in increasing order. We first show that the order statistics must satisfy a very restrictive set of properties. 

\noindent\textbf{Claim 1: } $\Pr[Y_{(i)}=1] = \frac{i}{n}$ and $\Pr[Y_{(i)}=0] =\frac{n-i}{n} $ for each $i\in \{1,\cdots,n-1\}$.
\begin{proof}[Proof of  Claim 1] Fix $i \in \{1,\cdots,n\} $. Consider the following family of location profiles 
$$
x^{\alpha,\beta} = (\underbrace{\alpha,\cdots, \alpha}_{n-i},\underbrace{\beta,\cdots, \beta}_{i})  \ \ \ \ \ \ \ \ \alpha <\beta.
$$
Let $S=\{1,\cdots, n-i\}$. Due to Strong Proportionality, we have 
\begin{align}
\E[d(x_1, f(x^{\alpha,\beta}))] &\leq \frac{n-(n-i)}{n} (\beta-\alpha)\nonumber \\ 
&= \frac{i}{n} (\beta-\alpha).  \nonumber
\end{align}
On the other hand, we see that 
\begin{align*}
\E[d(x_1, f(x^{\alpha,\beta}))] &\geq   0 \cdot \Pr[Y_{(i)}\leq \alpha]  + (\beta -\alpha )  \Pr[Y_{(i)}\geq \beta]\\
& = (\beta -\alpha )  \Pr[Y_{(i)}\geq \beta].
\end{align*}
Combining the two equations above, we see that  $\Pr[Y_{(i)}\geq \beta]\leq \frac{i}{n} $. Similarly we may take $S'=\{n-i+1,\cdots, n\}$. Strong Proportionality then implies 
$$
\E[d(x_n, f(x^{\alpha,\beta}))] \leq \frac{n-i}{n} (\beta-\alpha).$$
However, we also see that
\begin{align*}
\E[d(x_n, f(x^{\alpha,\beta}))] &\geq   (\beta -\alpha ) \Pr[Y_{(i)}\leq \alpha]  + 0\cdot  \Pr[Y_{(i)}\geq \beta]\\
& = (\beta -\alpha ) \Pr[Y_{(i)}\leq \alpha].
\end{align*}
Combining the two equations above, we have  $\Pr[Y_{(i)}\leq \alpha]\leq \frac{n-i}{n}$. Thus we see that the probability distribution of $Y_{(i)}$ must satisfy 
\begin{align}\label{cond: Y_i}
\begin{cases}
\Pr[Y_{(i)}\leq \alpha]\leq \frac{n-i}{n},  \\ 
\Pr[Y_{(i)}\geq \beta]\leq \frac{i}{n},
\end{cases}
\ \ \ \ \ \ \text{for any } \ \ \ 0 \leq \alpha < \beta \leq 1.
\end{align}
Letting $\beta=1$, we see that  $\Pr[Y_{(i)}=1]\leq \frac{i}{n}$, and for any $\alpha <1$ we have $\Pr[Y_{(i)}>\alpha]\geq \frac{i}{n}$ by condition~(\ref{cond: Y_i}). Taking the limit $\alpha \rightarrow 1$ shows  $\Pr[Y_{(i)}=1]= \frac{i}{n}$.

By a similar argument, by setting $\alpha= 0$ we see that $\Pr[Y_{(i)} = 0]\leq \frac{n-i}{n}$ and for any $\beta> 0 $ we have $\Pr[Y_{(i)}<\beta ]\geq \frac{n-i}{n}$. Taking the limit $\beta \rightarrow 0$ implies $\Pr[Y_{(i)} = 0]= \frac{n-i}{n}$ as desired.       

\end{proof}
By Claim 1, we know that $Y_{i}\in\{0,1\}$ for each $i\in \{1,\cdots, n-1 \}$. Thus the mechanism output is a distribution 
over $x_1,\cdots, x_n$. Since $f$ is a Phantom mechanism with $Y_{(i)}\in \{0,1\}$, we see that for any $k\in \{1,\cdots, k\}$,
\begin{align*}
\Pr[f(x)= \rank^k (x)] &=\Pr[f(x)=\med(\underbrace{0,\cdots, 0}_{n-k},  \underbrace{1,\cdots,1}_{k-1}, x_1, \cdots  ,x_n )]\\
&= \Pr[Y_{(n-k)}=0,\ Y_{(n-k+1)}=1 ] \\ 
&= \Pr[Y_{(n-k)}=0] - \Pr[Y_{(n-k+1)}= 0] \\ 
&= \frac{n-(n-k)}{n}- \frac{n-(n-k+1)}{n} \\ 
&= \frac{1}{n}.
\end{align*} 
Here the third equality follows from the fact that for any $i\in \{1,\cdots, n\}$, we have $\Pr[Y_{(i)}=0,Y_{(i+1)}=1]+\Pr[Y_{(i)}=0,Y_{(i+1)}=0]=\Pr[Y_{(i)}=0]$ and $\Pr[Y_{(i)}=0,Y_{(i+1)}=0]=\Pr[Y_{(i+1)}=0]$. The fourth equality follows by Claim~1. 

Thus we see that $f$ is equivalent to running the $\rank^k$ mechanism for each  $k\in \{1\cdots, n\}$ with probability $\frac{1}{n}$. Hence $f$ is the Random Rank mechanism.
\end{proof}
We now introduce the auxiliary lemma which was used in the proof of Theorem~\ref{thm: main}.
\begin{lemma}\label{lem: UE}
If a mechanism is universally anonymous, universally truthful and Proportional in expectation, then it is also ex-post efficient.
\end{lemma}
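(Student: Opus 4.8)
The plan is to use the Moulin-style characterization to reduce the claim to a statement about the phantoms of the realized deterministic mechanisms, and then to pin down the two extreme phantoms as $0$ and $1$ by applying Proportionality in expectation to the two degenerate profiles $(0,\dots,0)$ and $(1,\dots,1)$. Since $f$ is universally anonymous and universally truthful, it is by definition a distribution over deterministic mechanisms that are simultaneously anonymous and strategyproof; by Theorem~\ref{thm: moul} every such deterministic mechanism is a Phantom mechanism, so each realization of $f$ is determined by phantoms $0\le y_1\le \dots\le y_{n+1}\le 1$. Write $Y_1\le \dots \le Y_{n+1}$ for the (random) phantoms of the realized mechanism. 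By Theorem~\ref{rem: moul}, a Phantom mechanism is efficient precisely when its smallest phantom is $0$ and its largest is $1$, so it suffices to show $Y_1=0$ and $Y_{n+1}=1$ almost surely; this is exactly the assertion that $f$ places positive support only on efficient Phantom mechanisms.

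First I would evaluate $f$ on the all-zeros profile $x=(0,\dots,0)\in\{0,1\}^n$. Taking $S=N$, Proportionality in expectation gives $\E[d(x_1,f(x))]\le \frac{n-|S|}{n}=0$. On this profile a realized Phantom mechanism outputs $\med(0,\dots,0,Y_1,\dots,Y_{n+1})=Y_1$, since the $n$ agent locations at $0$ occupy the bottom $n$ positions of the $2n+1$ sorted values, placing $Y_1$ in the median slot (this holds whether or not some phantoms also equal $0$). Hence $\E[Y_1]=\E[d(x_1,f(x))]\le 0$, and as $Y_1\ge 0$ always, we get $Y_1=0$ almost surely.

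Next I would run the symmetric argument on the all-ones profile $(1,\dots,1)$: here the median equals $Y_{n+1}$, so $\E[1-Y_{n+1}]=\E[d(x_1,f(x))]\le 0$, and since $Y_{n+1}\le 1$ this forces $Y_{n+1}=1$ almost surely. Combining the two facts, almost every realization of $f$ is a Phantom mechanism with smallest phantom $0$ and largest phantom $1$, which is efficient by Theorem~\ref{rem: moul}; this is precisely ex-post efficiency.

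The argument is short, and the only point requiring care is the bookkeeping of the median on the degenerate profiles—verifying that clustering all agents at an endpoint pushes the extreme phantom into the median position—together with the measure-theoretic step that a nonnegative random variable with zero expectation vanishes almost surely, so that the inefficient realizations (those with $Y_1>0$ or $Y_{n+1}<1$) carry zero probability. I expect this to be the main, though modest, obstacle; everything else follows directly from the two cited characterizations and a single application of Proportionality in expectation at each endpoint.
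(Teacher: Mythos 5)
Your proof is correct and follows essentially the same route as the paper's: reduce $f$ to a distribution over Phantom mechanisms via Theorem~\ref{thm: moul}, apply Proportionality in expectation at the degenerate profiles $(0,\dots,0)$ and $(1,\dots,1)$ to force the extreme phantoms to $0$ and $1$ almost surely, and conclude ex-post efficiency from Theorem~\ref{rem: moul}. The only cosmetic difference is that the paper proves its Claim~2 by contradiction (a positive probability of $Y_{(n+1)}\le\beta<1$ would give positive expected cost on the all-ones profile), whereas you identify the median on each degenerate profile exactly as the extreme phantom and invoke the fact that a nonnegative random variable with zero expectation vanishes almost surely.
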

\begin{proof}
Let $f$ be an arbitrary mechanism that satisfies universal anonymity, universal truthfulness, and Proportionality in expectation. By Theorem~\ref{thm: moul}, we know that deterministic anonymous and strategyproof mechanisms are Phantom mechanisms. Therefore $f$ must be a probability distribution over Phantom mechanisms. Thus $f$ can be written as $f({x})=\med(Y_1,\cdots,Y_{n+1}, x_1,\cdots,x_n)$, where $Y_i$'s are  random variables corresponding to the phantom locations. We also denote the order statistics $Y_{(1)}= \min (Y_1, \cdots, Y_{n+1})$ and $Y_{(n+1)} = \max (Y_1, \cdots, Y_{n+1} )$.

\noindent\textbf{Claim 2: } $\Pr[Y_{(n+1)}=1] = 1$ and $\Pr[Y_{(1)}=0] =1 $.

We defer the proof of this claim to the appendix.

By Claim 2 and the property of the  median, we have that
\begin{align*}
f({x}) &= \med(Y_{(1)},\cdots,Y_{(n+1)}, x_1,\cdots,x_n) \\
&= \med(0,Y_{(2)},\cdots,Y_{(n)}, 1, x_1,\cdots,x_n,) \\
&=\med(Y_{(2)},\cdots,Y_{(n)}, x_1,\cdots,x_n)
\end{align*}
almost surely. Thus $f$ can be written as a randomized mechanism over Phantom mechanisms with $n-1$ Phantoms. Since we know from Theorem~\ref{rem: moul} that Phantom mechanisms with $n-1$ Phantoms are efficient, we see that $f$ is ex-post efficient. 
\end{proof}
We find that Theorem~\ref{thm: main} also extends to the setting where the domain is $X=\mathbb{R}$. As shown in the appendix, the extension requires a few modifications to the proof.

\begin{theorem}\label{thm: mainR}
A mechanism on the domain $X=\mathbb{R}$ is universally anonymous, universally truthful and Strong Proportional in expectation if and only if it is the Random Rank mechanism.
\end{theorem}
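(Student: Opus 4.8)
The plan is to adapt the proof of Theorem~\ref{thm: main} to the unbounded domain $X=\mathbb{R}$, since the overall skeleton carries over but two ingredients that relied on the bounded interval $[0,1]$ must be reworked. First I would re-examine the structural starting point: the backward direction is essentially unchanged, since on any two-point profile $x\in\{\alpha,\beta\}^n$ the Random Rank mechanism still places the facility at $\alpha$ with probability $|S_\alpha|/n$ and at $\beta$ with probability $|S_\beta|/n$, giving the required expected cost $\frac{n-|S|}{n}(\beta-\alpha)$. For the forward direction, the conceptual engine is the same: universal anonymity plus universal truthfulness forces $f$ to be a distribution over deterministic anonymous, strategyproof mechanisms. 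The first obstacle is that on $\mathbb{R}$ the Moulin characterization (Theorem~\ref{thm: moul}) admits phantoms at $\pm\infty$, i.e. deterministic mechanisms of the form $\med(Y_1,\dots,Y_{n+1},x_1,\dots,x_n)$ where some phantoms may be placed at $-\infty$ or $+\infty$; the clean statement of Theorem~\ref{rem: moul} (efficient iff two phantoms at the endpoints $0,1$) has no literal analogue without endpoints.

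The plan is therefore to first prove an analogue of Lemma~\ref{lem: UE} on $\mathbb{R}$, showing that ex-post efficiency still holds, which amounts to showing that the extreme order statistics $Y_{(1)}$ and $Y_{(n+1)}$ must be placed at $-\infty$ and $+\infty$ almost surely. Intuitively, if $Y_{(n+1)}$ had positive probability of landing at a finite value $c$, then by scaling a two-point profile out toward $+\infty$ (which the unbounded domain now permits) Proportionality in expectation would be violated, since the phantom would clamp the facility below the agents and inflate someone's cost beyond the $\frac{n-|S|}{n}(\beta-\alpha)$ bound. I would make this rigorous by choosing $\alpha,\beta$ large enough that the finite phantom mass forces the median strictly toward the smaller group, contradicting the expected-cost bound; the symmetric argument handles $Y_{(1)}$. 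This reduces $f$ to a distribution over Phantom mechanisms with exactly $n-1$ finite phantoms, exactly as in the bounded case.

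Once efficiency is established, I would rerun Claim~1, but here the main technical care is needed: the earlier proof extracted $\Pr[Y_{(i)}=1]=\frac{i}{n}$ and $\Pr[Y_{(i)}=0]=\frac{n-i}{n}$ by taking limits $\beta\to 1$ and $\alpha\to 0$, which used the compactness of $[0,1]$ to pin the phantoms to the two endpoints. On $\mathbb{R}$ there are no endpoints, so instead of concluding $Y_{(i)}\in\{0,1\}$ I expect to derive, from the two inequalities $\Pr[Y_{(i)}\le\alpha]\le\frac{n-i}{n}$ and $\Pr[Y_{(i)}\ge\beta]\le\frac{i}{n}$ holding for \emph{all} $\alpha<\beta$, that $Y_{(i)}$ is almost surely constant and equal to a single threshold value, with the cumulative distribution of $Y_{(i)}$ forced to a specific atom determined by the fractions $\frac{n-i}{n}$ and $\frac{i}{n}$. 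The key point is that letting $\alpha\uparrow\beta$ in the two inequalities squeezes the probability mass, so $Y_{(i)}$ cannot have mass on any open interval and its atoms are rigidly located; the appendix modification the authors allude to is presumably exactly this replacement of the ``endpoint limit'' argument with a squeezing argument on a dense set of thresholds.

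The final step is then identical in spirit to the bounded case: with each order statistic $Y_{(i)}$ pinned down, the event $\{f(x)=\rank^k(x)\}$ corresponds to a consecutive pair of order statistics straddling the desired rank, and the telescoping computation $\Pr[f(x)=\rank^k(x)]=\Pr[Y_{(n-k)}\text{ low}]-\Pr[Y_{(n-k+1)}\text{ low}]=\frac{1}{n}$ goes through verbatim, identifying $f$ with Random Rank. I expect the hard part to be the efficiency lemma on $\mathbb{R}$, specifically ruling out finite extreme phantoms; once the domain is effectively ``capped'' by forcing $Y_{(1)},Y_{(n+1)}\to\pm\infty$, the rest of the argument is a faithful transcription of the $[0,1]$ proof with endpoint limits replaced by a threshold-squeezing argument.
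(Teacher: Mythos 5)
Your skeleton matches the paper's: the backward direction is unchanged; the forward direction invokes Moulin's characterization on $\mathbb{R}$ (phantoms in $\mathbb{R}\cup\{+\infty,-\infty\}$), pins the extreme order statistics $Y_{(1)},Y_{(n+1)}$ to $-\infty,+\infty$ by pushing profiles beyond any finite phantom mass (the paper's Claim~3 does exactly this with a unanimous profile at $2\lambda$), and finishes with the telescoping computation. The genuine problem is your middle step, where you characterize the remaining $n-1$ phantoms. You assert that the efficiency reduction leaves ``exactly $n-1$ \emph{finite} phantoms'' and that the squeeze will show each $Y_{(i)}$ is ``almost surely constant and equal to a single threshold value,'' i.e., a single finite atom. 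This is wrong, and it is inconsistent both with your own final step and with the answer you are trying to reach. Random Rank on $\mathbb{R}$ has \emph{all} of its phantoms at $\pm\infty$: the realization $\rank^k$ uses $n-k$ phantoms at $-\infty$ and $k-1$ at $+\infty$, so each interior order statistic is genuinely two-valued, with $\Pr[Y_{(i)}=-\infty]=\frac{n-i}{n}$ and $\Pr[Y_{(i)}=+\infty]=\frac{i}{n}$ (the paper's Claim~4). If instead each $Y_{(i)}$ were a.s.\ a finite constant, $f$ would be a deterministic Phantom mechanism, which by Proposition~\ref{prop: det} cannot satisfy Strong Proportionality; moreover, your own telescoping $\Pr[Y_{(n-k)}\text{ low}]-\Pr[Y_{(n-k+1)}\text{ low}]$ would then equal $0$ or $1$, never $\frac{1}{n}$.

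The fix lives inside the tools you already propose. From $\Pr[Y_{(i)}\le\alpha]\le\frac{n-i}{n}$ and $\Pr[Y_{(i)}\ge\beta]\le\frac{i}{n}$ for all real $\alpha<\beta$, your squeeze $\alpha\uparrow\beta$ gives $\Pr[Y_{(i)}<\beta]\le\frac{n-i}{n}$ and $\Pr[Y_{(i)}\ge\beta]\le\frac{i}{n}$ for every finite $\beta$; since these two events are complementary and the two upper bounds sum to exactly $1$, both must be tight. Read correctly, this says the distribution function of $Y_{(i)}$ is \emph{flat} at $\frac{n-i}{n}$ on all of $\mathbb{R}$: $Y_{(i)}$ puts zero mass on every finite point, and all of its mass sits at the two infinite atoms, $\frac{n-i}{n}$ at $-\infty$ and $\frac{i}{n}$ at $+\infty$. (Equivalently, and this is the paper's shorter route in Claim~4, let $\alpha\to+\infty$ and $\beta\to-\infty$ in the two inequalities and note the resulting lower bounds on the masses at $\pm\infty$ already sum to $1$.) With this corrected conclusion---no finite phantoms at all, rather than $n-1$ finite ones---the telescoping step goes through exactly as in the bounded case and identifies $f$ with Random Rank.
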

We remark that the requirements of Theorem \ref{thm: main} are tight in the sense that if any requirement is weakened or removed, the result fails to hold. As we show in the appendix, if Strong Proportionality in expectation is weakened to Proportionality in expectation, the fairness property can be satisfied by the `Random Phantom' mechanism, which sets phantoms $y_1=0$ and $y_{n+1}=1$ and draws the remaining phantoms I.I.D. from $U([0,1])$. This mechanism also satisfies universal truthfulness, ex-post efficiency and universal anonymity.

If universal anonymity is weakened to anonymity in expectation, we have the Random Dictator mechanism, which satisfies universal truthfulness and Strong Proportionality in expectation as it is effectively the same as the Random Rank mechanism.
\begin{definition}[Random Dictator]
The Random Dictator mechanism chooses an agent uniformly at random to be the dictator and runs the \textit{Dictator} mechanism i.e. places the facility at the dictator's location.
\end{definition}
Random Dictator is not universally anonymous as it is a distribution over dictatorial mechanisms which are not anonymous, but it is anonymous in expectation as each agent's reported location has an equal probability of being selected. Hence an agent that knows the realization of the random coin under  Random Dictator may change its labelling to achieve a better outcome - this is not possible under the universally anonymous Random Rank mechanism.

If universal truthfulness is weakened to strategyproofness in expectation, there is a class of mechanisms
that have better ex-post fairness guarantees, as we will discuss in the next section. 

\section{Improving Ex-post Fairness}

In the previous section, we identified  Random Rank as the only mechanism satisfying certain desirable properties. One possible drawback of Random Rank is that while it satisfies Strong Proportionality \emph{ex-ante}, it does not have good ex-post fairness guarantees. We consider a class of mechanisms called AverageOrRandomRank that satisfies Strong Proportionality in expectation.\footnote{\citet{FeWi13} look at a similar mechanism, but their focus is on welfare rather than fairness.}

\begin{definition}[AverageOrRandomRank Mechanism]
The \emph{AverageOrRandomRank}$-p$ mechanism places the facility at the average agent location with $p$ probability, and places the facility at the output of the Random Rank mechanism with $1-p$ probability.
\end{definition}
When $p\in[0,\frac{1}{2}]$, the class of mechanisms additionally satisfy strategyproofness in expectation.
\begin{theorem}\label{thm: avgRR}
The AverageOrRandomRank$-p$ mechanism satisfies Strong Proportionality in expectation and is strategyproof in expectation if and only if $p\in [0,\frac{1}{2}]$.
\end{theorem}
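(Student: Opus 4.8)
The statement has two parts: first, that AverageOrRandomRank$-p$ satisfies Strong Proportionality in expectation for all $p$, and second, that it is strategyproof in expectation if and only if $p \in [0,\frac{1}{2}]$. I would dispatch the Strong Proportionality claim quickly, then focus the bulk of the argument on the strategyproofness equivalence, which is the substantive part.

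\textbf{Strong Proportionality in expectation.} The plan is to exploit linearity of expectation. On any profile $x \in \{\alpha,\beta\}^n$ with group $S$ at one location, the expected distance under AverageOrRandomRank$-p$ decomposes as $p\cdot d(x_i,\med(x)) + (1-p)\cdot\E[d(x_i,f_{RR}(x))]$, where $\med(x)$ denotes the average location. On a two-location profile the average location is exactly $\frac{|S_\alpha|\alpha + |S_\beta|\beta}{n}$, so for $i \in S_\alpha$ the averaging term contributes distance $\frac{|S_\beta|}{n}(\beta-\alpha) = \frac{n-|S_\alpha|}{n}(\beta-\alpha)$, which already meets the Strong Proportionality bound with equality. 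Since the Random Rank term also meets this bound with equality (as shown in the $(\impliedby)$ direction of Theorem~\ref{thm: main}), any convex combination does too. Hence the bound holds for every $p \in [0,1]$.

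\textbf{Strategyproofness in expectation.} Here I would again use linearity: the expected cost to agent $i$ under a report $x_i'$ is $p\,\E_{\text{avg}}[\cdots] + (1-p)\,\E_{\text{RR}}[\cdots]$. The Random Rank component is already truthful (indeed strategyproof ex-post), so it can only discourage manipulation; all the incentive pressure to misreport comes from the averaging component, where agent $i$ can drag the mean toward its own peak by overreporting its distance from the facility. The plan is to fix $x_{-i}$, compute how much agent $i$ gains in expected cost from a deviation $x_i \mapsto x_i'$ under the averaging rule (a gain that scales like $\frac{1}{n}|x_i - x_i'|$ in the favorable direction) and bound how much it loses under the Random Rank rule. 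For the ``if'' direction I would show that for $p \le \frac{1}{2}$ the Random Rank loss always dominates the averaging gain; for the ``only if'' direction I would exhibit an explicit profile and deviation (likely an agent at an extreme overreporting to pull the average toward itself) where for $p > \frac{1}{2}$ the averaging gain strictly exceeds the Random Rank penalty, breaking truthfulness.

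\textbf{Main obstacle.} I expect the hard part to be pinning down the exact threshold $p = \frac{1}{2}$ rather than some other constant, which requires a tight, not merely order-of-magnitude, comparison of the two effects. This means carefully analyzing the worst-case profile: I would compute the marginal effect on agent $i$'s expected cost of an infinitesimal overreport, separating the $\frac{1}{n}$-rate movement of the average from the discrete, $\frac{1}{n}$-weighted jumps in the Random Rank distribution as the agent's reported rank shifts. Identifying the profile that maximizes the averaging incentive while minimizing the Random Rank deterrent (presumably a configuration where the agent's report barely affects its own selected rank but maximally shifts the mean) is where the $\frac{1}{2}$ emerges, and getting that constant exactly right is the crux of the proof.
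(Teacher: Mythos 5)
Your proposal follows essentially the same route as the paper's proof: decompose the expected cost as $p$ times the averaging cost plus $(1-p)$ times the Random Rank cost, note that each component meets the Strong Proportionality bound with equality on two-point profiles (so any convex combination does too), and for strategyproofness compare the averaging gain from a deviation of size $d$ (at most $\frac{pd}{n}$) against the Random Rank penalty ($\frac{(1-p)d}{n}$), exhibiting a profile where the gain is fully realized to break truthfulness when $p>\frac{1}{2}$. The only remark worth adding is that the step you flag as the crux is easier than you fear: the Random Rank penalty is profile-independent and exactly $\frac{d}{n}$ (its output distribution is uniform over the reported multiset, so a misreport moves only the agent's own atom, raising that outcome's cost from $0$ to $d$ and leaving all others unchanged), while the averaging gain is at most $\frac{d}{n}$ by the triangle inequality, with equality whenever the mean stays on the far side of $x_i$ (e.g.\ an interior agent pulling the mean toward itself) --- so the threshold $\frac{1}{2}$ falls out immediately, and this is precisely the comparison the paper's own proof asserts.
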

\begin{remark}
This result additionally holds when the domain is $X=\mathbb{R}$, as the proof does not involve the interval endpoints.
\end{remark}
To motivate this class of mechanisms, we refer to the example in Figure~\ref{fig2}. When we run Random Rank, the facility is placed at $0$ with probability $\frac{2}{3}$, or at $\frac{1}{3}$ with probability $\frac{1}{3}$ - either solution is unfair for the group of agents that do not have the facility at their location. In comparison, the AverageOrRandomRank$-\frac{1}{2}$ mechanism halves the respective probabilities that the facility is placed at $0$ or $\frac{1}{3}$, and gives a $\frac{1}{2}$ probability of placing the facility at the average location of $\frac{1}{9}$, which is a fair solution for both groups of agents. Overall, the AverageOrRandomRank$-\frac{1}{2}$ mechanism gives a fairer distribution of outcomes in the ex-post sense whilst retaining strategyproofness in expectation.

\section{Achieving Stronger Ex-ante Fairness}
The fairness axiom of Strong Proportionality is a weaker variant of a Strong Proportional Fairness, an axiom proposed by \citet{ALLW21}. Strong Proportional Fairness is defined as follows.
	
	\begin{definition}[Strong Proportional Fairness (SPF)]
	A mechanism $f$ satisfies \emph{Strong Proportional Fairness (SPF)} if for any location profile ${x}$ within range $R$ and subset of agents $S\subseteq N$ within range $r$, 
	\[d(x_i, f(x))\le R\frac{n-|S|}{n}+r \qquad \forall i\in S.\]	\end{definition}

This axiom captures fairness concerns between cohesive groups of agents that are near each other but not necessarily at the same location. The motivation behind this axiom is similar to that of proportional representation axioms in social choice which guarantee an appropriate level of representation for a sufficiently large group of agents with ``similar" preferences \citep{SFF+17a,ABC+16a}. Incidentally, the Random Rank mechanism satisfies the notion of Strong Proportional Fairness in expectation,\footnote{Strong Proportional Fairness in expectation is defined similarly to Strong Proportionality in expectation.} leading to the following characterization.

\begin{theorem}\label{thm: SPF}
A mechanism is universally anonymous, universally truthful and SPF in expectation if and only if it is the Random Rank mechanism.
\end{theorem}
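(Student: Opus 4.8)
The plan is to leverage the structure of the main characterization theorem (Theorem~\ref{thm: main}) and observe that SPF in expectation is logically stronger than Strong Proportionality in expectation. The key observation is that Strong Proportionality is precisely the special case of SPF in which the subset $S$ consists of agents all at the \emph{same} location (so that its range $r=0$), and the whole profile lies at two points (so $R=\beta-\alpha$). Thus any mechanism satisfying SPF in expectation automatically satisfies Strong Proportionality in expectation. This immediately handles the harder direction of the ``only if'' part: if $f$ is universally anonymous, universally truthful, and SPF in expectation, then $f$ satisfies all three hypotheses of Theorem~\ref{thm: main}, and we conclude $f=f_{RR}$ for free, with no new work.

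The genuinely new content is the other direction: we must verify that the Random Rank mechanism $f_{RR}$ itself satisfies SPF in expectation. First I would recall that $f_{RR}$ is universally anonymous and universally truthful by the same argument as in Theorem~\ref{thm: main} (each realization $\rank^k$ is strategyproof and anonymous by Theorem~\ref{rem: moul}). It then remains to bound $\E[d(x_i, f_{RR}(x))]$ for an arbitrary profile $x$ within range $R$ and an arbitrary subset $S$ within range $r$, showing it is at most $R\frac{n-|S|}{n}+r$ for each $i\in S$. The natural approach is to analyze where $f_{RR}$ places the facility: since $\rank^k(x)$ is the $k$th largest agent location and $k$ is uniform on $\{1,\dots,n\}$, the facility lands on each of the $n$ agent positions with probability $\tfrac1n$ (accounting for multiplicities). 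Hence, fixing $i\in S$, I would write $\E[d(x_i,f_{RR}(x))]=\frac1n\sum_{j\in N}|x_i-x_j|$ and split the sum into contributions from agents inside $S$ and outside $S$.

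For agents $j\in S$, each term $|x_i-x_j|\le r$ since $S$ has range $r$, contributing at most $\frac{|S|}{n}r\le r$. For agents $j\notin S$, each term $|x_i-x_j|\le R$ since the whole profile has range $R$, contributing at most $\frac{n-|S|}{n}R$. Summing the two bounds gives
\[
\E[d(x_i,f_{RR}(x))]\le \frac{n-|S|}{n}R+\frac{|S|}{n}r\le \frac{n-|S|}{n}R+r,
\]
which is exactly the SPF bound. I expect the bookkeeping in this term-splitting to be the only real step, and it is routine once one exploits the clean fact that $f_{RR}$ realizes each agent location equiprobably.

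The main obstacle, such as it is, lies less in difficulty than in care: one must correctly argue that $f_{RR}$ places the facility at each agent's location with probability exactly $\tfrac1n$ even when locations coincide, so that the expected cost reduces to the symmetric average $\frac1n\sum_j |x_i-x_j|$. After establishing this, the two-part estimate above closes the argument. Since both directions now follow — one from Theorem~\ref{thm: main} via the implication SPF $\implies$ Strong Proportionality, and the other from the direct computation — the characterization is complete.
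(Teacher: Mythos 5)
Your proposal is correct and takes essentially the same approach as the paper: the ``only if'' direction is reduced to Theorem~\ref{thm: main} by noting that SPF in expectation implies Strong Proportionality in expectation, and the ``if'' direction bounds $\E[d(x_i,f_{RR}(x))]$ by splitting the facility's landing location into agents inside $S$ (distance at most $r$, probability $\tfrac{|S|}{n}$) and agents outside $S$ (distance at most $R$). Your exact identity $\E[d(x_i,f_{RR}(x))]=\tfrac1n\sum_{j\in N}|x_i-x_j|$ is just a slightly more explicit form of the paper's bound $R\,(1-\Pr[X_S])+r\,\Pr[X_S]$, so no substantive difference.
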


\section{Conclusion and Future Directions}\label{sec: con}
In this work, we have proposed a new axiom of proportional fairness for the facility location problem called Strong Proportionality and identified mechanisms that satisfy Strong Proportionality in expectation along with universal anonymity and either universal truthfulness or strategyproofness in expectation.

Our results have many natural extensions. The problem could be extended to multiple dimensions. It would also be interesting to characterize mechanisms that satisfy Strong Proportionality in expectation along with a definition of strategyproofness based on stochastic dominance. Other directions for our work include computing approximation ratios for social cost for mechanisms satisfying proportional fairness properties, extending the problem to multiple facilities and discussing facilities with capacity limits.
\section*{Acknowledgements}
This work was partially supported by the ARC Laureate Project FL200100204 on "Trustworthy AI". The authors also thank Barton Lee for valuable comments.
\bibliographystyle{named}
\bibliography{abb.bib,RandomPropFL-ArXiV-26-05-22.bib}
\newpage
\appendix
\section{Proof of Proposition~\ref{prop: det}}
\begin{customprop}{\ref{prop: det}}
No deterministic and strategyproof mechanism satisfies Strong Proportionality.
\end{customprop}
\begin{proof}
For $n=2$, consider the location profile with 1 agent at $0$ and 1 agent at $0.5$. Strong Proportionality requires that the facility be placed at $0.25$. Now also consider the location profile with 1 agent at $0$ and $1$ agent at $1$. Strong Proportionality requires that the facility be placed at $0.5$. However, this means the agent at $0.5$ in the first location profile can misreport their location as $1$ to have the facility placed at their own location, violating strategyproofness. Thus strategyproofness and Strong Proportionality are incompatible in deterministic mechanisms.
\end{proof}
\section{Proof of Claim~2}
\noindent\textbf{Claim 2: } $\Pr[Y_{(n+1)}=1] = 1$ and $\Pr[Y_{(1)}=0] =1 $.
\begin{proof}[Proof of  Claim 2] 
We first show that $\Pr[Y_{(n+1)}=1] = 1$. Suppose the contrary, that there exists $\beta<1$ such that $
\Pr[Y_{(n+1)} \leq   \beta  ]> 0$. Under the location profile ${x}=(1,\cdots,1)$, if $f$ satisfies Proportionality in expectation we must have $\E[d(x_n,f({x}))] = 0$. However, this leads to a contradiction since 
\begin{align*}
\E [d(x_n, f({x}))] &\geq  (1-\beta) \Pr[Y_{(n+1)} \leq  \beta  ]  \\ 
&> 0 ,
\end{align*}
where the first inequality follows from the fact that if $Y_{(n+1)}\leq  \beta$ then $f({x})\leq \beta$, and thus  $d(1,f({x}))\geq (1-\beta )$. 

A similar, symmetric argument can be applied to show that $\Pr[Y_{(1)}=0] =1 $ holds.
\end{proof}
\section{Extension of Theorem~\ref{thm: main} to the real line $\mathbb{R}$}
In this section we extend the result of Theorem~\ref{thm: main} to the real line. We use the following theorem which characterizes strategyproof and anonymous mechanisms on the real line as Phantom mechanisms.
\begin{customthm}{7}[\citet{Moul80} ] \label{thm: moulin}
A mechanism $f$ on the domain $X=\mathbb{R}$ is strategyproof and anonymous if and only if there exists $(n+1)$ real numbers $y_{1},\cdots, y_{n+1}\in \mathbb{R}\cup \{+\infty,-\infty\}$ such that 
 $$
 f(x)=\med(x_1,\cdots, x_n, y_1,\cdots,y_{n+1})
 $$
\end{customthm}

We also modify our definition of the Random Rank mechanism. Given a profile of locations ${x}\in \mathbb{R}^n$, we define
$$\rank^k({x}):=\med(\underbrace{-\infty,\dots,-\infty}_{n-k},x_1,\dots,x_n,\underbrace{+\infty,\dots,+\infty}_{k-1}).$$
The Random Rank mechanism on the real line then chooses $k\in \{1,\cdots, n\}$ uniformly at random and outputs $\rank^k({x})$.

\begin{customthm}{\ref{thm: mainR}}
A mechanism on the domain $X=\mathbb{R}$ is universally anonymous, universally truthful and Strong Proportional in expectation if and only if it is the Random Rank mechanism.  
\end{customthm}
\begin{proof}
$(\implies)$ By Theorem~\ref{thm: moulin} we know that $f$ is a probability distribution over Phantom mechanisms. For each $i\in \{1,\cdots,n+1\}$, denote $Y_{i}$ as the random variable corresponding to the location of the $i$'th Phantom. Also denote $Y_{(i)}$ as the random variable corresponding to the $i$'th order statistic. 

\noindent \textbf{Claim 3:} $\Pr[Y_{(n+1)}=+\infty] = 1$ and $\Pr[Y_{(1)}=-\infty] =1 $.
\begin{proof}[Proof of Claim 3]
Suppose on the contrary that there exists $ \lambda \in \mathbb{R}$ such that $\Pr[Y_{(n+1)}\leq \lambda] >0$. Consider a location profile $x=(2\lambda, \cdots, 2\lambda)$. If $f$ satisfies Strong Proportionality in expectation then we have $\E[d(x_1, f(x))]=0$. However, this contradicts the following
\begin{align*}
\E[d(x_1, f(x))]& \geq |2\lambda-\lambda| \Pr[Y_{(n+1)}\leq \lambda] \\ 
&> 0,
\end{align*} 
where the inequality follows since if $Y_{(n+1)}\leq \lambda$ then $f(x)\leq \lambda$, and thus $d(x_1,f(x))\geq |\lambda|$.  

A similar, symmetric argument can be used to obtain $\Pr[Y_{(1)}=-\infty] =1$.
\end{proof}
By Claim~3 we see that only $n-1$ Phantoms are necessary since
\begin{align*}
f({x})&=\med(-\infty,Y_{(2)},\cdots,Y_{(n)}, x_1,\cdots,x_n,+\infty)\\
&=\med(Y_{(2)},\cdots,Y_{(n)}, x_1,\cdots,x_n)
\end{align*}
For notational convenience, we relabel the remaining $n-1$ Phantoms such that  \[f(x)=\med(Y_{(1)},\cdots,Y_{(n-1)}, x_1,\cdots,x_n).\]

\noindent \textbf{Claim 4:} $\Pr[Y_{(i)}=+\infty] = \frac{i}{n}$ and $\Pr[Y_{(i)}=-\infty] =\frac{n-i}{n} $ for each $i\in \{1,\cdots,n-1\}$.
\begin{proof}[Proof of Claim 4]
Using the arguments presented in Claim~1, we see that Strong Proportionality implies 
\begin{align}\label{cond: Y_i and infty}
\begin{cases}
\Pr[Y_{(i)}\leq \alpha]\leq \frac{n-i}{n},  \\ 
\Pr[Y_{(i)}\geq \beta]\leq \frac{i}{n},
\end{cases}
\ \ \ \ \ \ \text{for any } \ \ \ \alpha < \beta,  \ \ \  \alpha,\beta\in \mathbb{R} .
\end{align}
 From above we see that indeed $\Pr[Y_{(i)}=+\infty] = \frac{i}{n}$ and $\Pr[Y_{(i)}=-\infty] =\frac{n-i}{n}$. 
\end{proof}
By Claim~4, we see that $Y_{(i)}\in \{-\infty,+\infty\}$ for each $i\in \{1,\cdots, n-1\}$ and furthermore,
\begin{align*}
\Pr[f(x)=&\med(\underbrace{-\infty,\cdots,-\infty}_{n-k},  \underbrace{+\infty,\cdots,+\infty}_{k-1}, x_1, \cdots  ,x_n )]\\
&= \Pr[Y_{(n-k)}=-\infty,\ Y_{(n-k+1)}=+\infty ] \\ 
&= \Pr[Y_{(n-k)}=-\infty] - \Pr[Y_{(n-k+1)}= -\infty] \\ 
&= \frac{n-(n-k)}{n}- \frac{n-(n-k+1)}{n} \\ 
&= \frac{1}{n}.
\end{align*} 
The third equality follows from the fact that for any $i\in \{1,\cdots, n\}$, we have $\Pr[Y_{(i)}=-\infty,Y_{(i+1)}=+\infty]+\Pr[Y_{(i)}=-\infty,Y_{(i+1)}=-\infty]=\Pr[Y_{(i)}=-\infty]$ and $\Pr[Y_{(i)}=-\infty,Y_{(i+1)}=-\infty]=\Pr[Y_{(i+1)}=-\infty]$. 

Hence we see that $f$ is equivalent to running $\rank^k$ mechanism for each  $k\in \{1\cdots, n\}$ with probability $\frac{1}{n}$. Thus indeed $f$ is the Random Rank mechanism. 

$(\impliedby)$ Similar to the case when $X=[0,1]$, the Random Rank mechanism is universally anonymous and universally truthful when the domain is $X=\mathbb{R}$ as each realization of the mechanism, $\rank^k$, is strategyproof and anonymous by Theorem~\ref{thm: moulin}. The proof that Random Rank satisfies Strong Proportionality in expectation is identical that in the proof of Theorem~\ref{thm: main}.
\end{proof}

\begin{remark}
Note that the Phantoms are random variables on the extended real line $\mathbb{R}\cup\{+\infty,-\infty\}$, and thus a random variable $Y$ may satisfy $\Pr[Y=+\infty]>0$. This is in contrast to random variables defined on $\mathbb{R}$ in which every random variable $Y$ must satisfy $\lim \limits_{N\rightarrow \infty}\Pr[Y\geq N ] = 0$.
\end{remark}
\section{I.I.D. Phantom Mechanisms}
\begin{definition}[I.I.D Phantom Mechanism]
A mechanism is an \emph{I.I.D Phantom} mechanism if it is a Phantom mechanism with $y_1=0$, $y_{n+1}=1$ and the remaining phantoms $y_1,\dots,y_{n-1}$ are drawn I.I.D according to some distribution $D$ on $[0,1]$
\end{definition}

The I.I.D Phantom mechanisms are universally truthful, ex-post efficient and universally anonymous, as they only give positive support to instances of deterministic Phantom mechanisms with $y_1=0$ and $y_{n+1}=1$, which by Theorem~\ref{rem: moul} are strategyproof, efficient and anonymous. If the expected values of the Phantom distribution's order statistics are uniformly spaced on $[0,1]$, then the mechanism also satisfies Proportionality in expectation.
\begin{customthm}{8}\label{thm: IID}
An I.I.D Phantom mechanism with distribution $D$ satisfies Proportionality in expectation if and only if the order statistics $D_{(i)}$ have expected value $\mathbb{E}[D_{(i)}]=\frac{i}{n}$ for each $i\in\{1,\cdots, n-1\}$.
\end{customthm}
\begin{proof}

$(\implies)$ Fix any ${i}\in \{1,\cdots, n-1\}$. Consider a location profile $x=(\underbrace{0,\cdots, 0}_{n-i},\underbrace{1,\cdots,1}_{i} )$ and  let $S^0$ be the set of agents located at $0$, thus $|S^0|=n-i$. Denote $D_{(i)}$ as the random variable corresponding to the location of the $i$'th  order statistic of the Phantoms. Since our mechanism is a Phantom mechanism the output location of the mechanism is distributed as $D_{(i)}$. Thus for any $i\in S^0$ we have  
\begin{align*}
\E[D_{(i)}] &= \E[d(0,f({x}))] \\
&=\E[d(x_i,f({x}))] \\
&\leq \frac{n-|S^0|}{n} \\ 
&= \frac{i}{n}
\end{align*}
where the second last equality holds since $f$ satisfies Proportionality in expectation. Similarly let $S^1$ be  the set of agents located at $1$, and thus $|S^1|=i$. For $j\in S^1$, by proportionality in expectation we see that 
\begin{align*}
\E[d(x_j,f({x}))] &= \E[d(1,f({x}))] \\ 
&\leq \frac{n-|S^1|}{n} \\
&= \frac{n-i}{n}
\end{align*}  
Since $\E[d(1,f({x}))]=1- \E[D_{(i)}]$, by rearranging above we see that  $\E[D_{(i)}]\geq \frac{i}{n}$. Hence indeed $\E[D_{(i)}]=\frac{i}{n}$ for each ${i}\in \{1,\cdots, n-1\}$ as needed to show. 

$(\impliedby)$ 
For any $x\in \{0,1\}^n$, let $S^0$ be the set of agents located at $0$ and $S^1$ be the set of agents located at $1$. Let $|S^0|=k$ and $|S^1|=n-k$, the location of the facility is distributed according to $D_{(n-k)}$.  Hence for any $i\in S^0$, we have  $\E[d(x_i,f({x}))]=\E[D_{(n-k)}]= \frac{n-|S^0|}{n}$. Similarly for $j\in S^1$, we have $\E[d(x_j,f({x}))]=1-\E[D_{(n-k)}]= 1- \frac{n-k}{n} = \frac{n-|S^1|}{n}$ as desired. 
\end{proof}
By Theorem~\ref{thm: IID}, we know that the Random Phantom mechanism is Proportional in expectation.

\section{Proof of Theorem~\ref{thm: avgRR}}
\begin{customthm}{\ref{thm: avgRR}}
The AverageOrRandomRank$-p$ mechanism satisfies Strong Proportionality in expectation and is strategyproof in expectation if and only if $p\in [0,\frac{1}{2}]$.
\end{customthm}
\begin{proof}
We first show that the mechanism is Strong Proportional in expectation. Consider any location profile $x\in\{\alpha,\beta\}^n$, and let $S_\alpha$ denote the set of agents at $\alpha$ and $S_\beta=N\backslash S$ denote the set of agents at $\beta$. The AverageOrRandomRank$-p$ mechanism places the facility at:
\begin{itemize}
\item $\alpha$ with probability $(1-p)\frac{|S_\alpha|}{n}$,
\item at $\beta$ with probability $(1-p)\frac{|S_\beta|}{n}$,
\item and at $\frac{|S_\alpha|\alpha+|S_\beta|\beta}{n}$ with probability $p$.
\end{itemize}
For all $i\in S_\alpha$, we have
\begin{align*}
\E[d(x_i,f_{RR}(x))]&=(1-p)\frac{|S_\beta|}{n}(\beta-\alpha)+p\left(\frac{|S_\alpha|\alpha+|S_\beta|\beta}{n}-\alpha\right)\\
&=\frac{|S_\beta|}{n}\beta-\alpha(1-p)\frac{|S_\beta|}{n}+p\alpha\frac{|S_\alpha|-n}{n}\\
&=\frac{|S_\beta|}{n}(\beta-\alpha)=\frac{n-|S_\alpha|}{n}(\beta-\alpha),
\end{align*}
and for all $j\in S_\beta$, we have
\begin{align*}
\E[d(x_j,f_{RR}(x))]&=(1-p)\frac{|S_\alpha|}{n}(\beta-\alpha)+p\left(\beta-\frac{|S_\alpha|\alpha+|S_\beta|\beta}{n}\right)\\
&=-\frac{|S_\alpha|}{n}\alpha+\beta(1-p)\frac{|S_\alpha|}{n}+p\beta\frac{n-|S_\beta|}{n}\\
&=\frac{|S_\alpha|}{n}(\beta-\alpha)=\frac{n-|S_\beta|}{n}(\beta-\alpha).
\end{align*}
Hence, AverageOrRandomRank$-p$ satisfies Strong Proportionality in expectation.

We now show that the mechanism is strategyproof in expectation. Suppose an agent at $x_i$ deviates by distance $d$ to attain a better expected distance. Its expected cost is reduced by $\frac{dp}{n}$ from the average location moving closer, but is also increased by $\frac{d(1-p)}{n}$ from its reported location moving away. For strategyproofness we require that
$\frac{d(1-p)}{n}\geq \frac{dp}{n}$,
which is satisfied for $p\in [0,\frac{1}{2}]$. Furthermore, it is easy to see that if $p>\frac{1}{2}$, an agent can improve its expected distance from the facility by misreporting its location.
\end{proof}
\section{Proof of Theorem~\ref{thm: SPF}}
\begin{customthm}{\ref{thm: SPF}}
A mechanism is universally anonymous, universally truthful and SPF in expectation if and only if it is the Random Rank mechanism.
\end{customthm}
\begin{proof}
Since SPF implies Strong Proportionality, by Theorem~\ref{thm: main} it suffices to prove Random Rank satisfies SPF. Consider any location profile $x$ within range $R$ and subset of agents $S\subseteq N$ within range $r$. Denote $X_S$ as the event that Random Rank places the facility at an agent in $S$. Then for any $i\in S$, we have
\begin{align*}
\E[d(x_i,f(x))] &\leq R (1-\Pr[X_S])+r \Pr[X_S]   \\ 
&\leq   R \left( \frac{n-|S|}{n} \right) +r \frac{|S|}{n}\\
&\leq R\left( \frac{n-|S|}{n}\right) +r.
\end{align*}
\end{proof}
\end{document}